\definecolor{darkred}  {rgb}{0.5,0,0}
\definecolor{darkblue} {rgb}{0,0,0.5}
\definecolor{darkgreen}{rgb}{0,0.5,0}
\crefname{lemma}{Lemma}{Lemmas}
\crefname{proposition}{Proposition}{Propositions}
\crefname{definition}{Definition}{Definitions}
\crefname{theorem}{Theorem}{Theorems}
\crefname{conjecture}{Conjecture}{Conjectures}
\crefname{corollary}{Corollary}{Corollaries}
\crefname{section}{Section}{Sections}
\crefname{appendix}{Appendix}{Appendices}
\crefname{figure}{Fig.}{Figs.}
\crefname{equation}{Eq.}{Eqs.}
\crefname{table}{Table}{Tables}
\newtheorem{theorem}{Theorem}
\newtheorem{lemma}[theorem]{Lemma}
\newtheorem*{conjecture*}{Conjecture}
\theoremstyle{definition}
\newcommand{\ket}[1]{|#1\rangle}
\newcommand{\bra}[1]{\langle#1|}
\newcommand{\kb}[1]{|{#1}\rangle\!\langle{#1}|}
\DeclareMathAlphabet{\matheu}{U}{eus}{m}{n}
\DeclareMathOperator{\tr}{tr}
\newcommand{\Hilb}{{\matheu H}}
\newcommand{\state}{{\matheu D}}
\newcommand{\sop}[1]{{\mathcal #1}}
\begin{document}

\title{No-Go Bounds for Quantum Seals}
\author[1]{Shelby Kimmel}
\affil[1]{Department of Computer Science,
Middlebury College, \texttt{skimmel@middlebury.edu}}
\author[2]{Shimon Kolkowitz}
\affil[2]{Department of Physics, University of Wisconsin-Madison, \texttt{kolkowitz@wisc.edu}}

\date{}

\maketitle

\begin{abstract}

We investigate the feasibility of quantum seals. A quantum seal is a state
provided by Alice to Bob along with information which Bob can use to make a measurement, 
``break the seal,'' and read the classical message stored inside.
There are two success criteria for a seal: the probability Bob can
successfully read the message without any
further information from Alice must be high, and if Alice asks for the state back from Bob, the
probability Alice can tell if Bob broke the seal without permission must be
high. We build on the work of [Chau, PRA 2007], which gave optimal bounds on these criteria, 
showing that they are mutually exclusive for high probability. We weaken the assumptions of this previous work by providing Bob with only a classical description of a prescribed measurement, rather than classical descriptions of the possible seal states. We show that this weakening does not affect the bounds but does simplify the analysis. We also prove upper and lower bounds on an alternative operational metric for measuring the success criteria.
\end{abstract}



\section{Introduction}
\label{Intro}
With a quantum seal, a sender hopes to communicate classical information using a quantum state and also verify whether that information has been extracted.
This functionality is called a ``seal'' 
in analogy with the impressions made in wax or clay that have been 
used to ensure the integrity of letters and packages for thousands of years \cite{HistoryofSeals}. 
Alice wishes to give Bob a message that is only to be opened and read by Bob at a later date when an
agreed upon set of conditions has been met. For example, the conditions might be
``Do not open the message until the third night of Hannukah,'' or 
``Only open the message if instructed to do so by Alice.''  
To this end, Alice gives Bob a quantum state and a classical description of a quantum
measurement. If Bob makes the measurement described by Alice, with high
probability he will extract the classical value of the message. Bob should be able
to make this measurement without further information from Alice. 
However, at any time before the agreed upon conditions have been met, 
Alice can request the state back, and she would like to detect with high probability whether Bob has
cheated and read the message, thereby ``breaking the seal'' prematurely.  
A seal therefore has two success criteria:
1) the probability Bob can successfully read the message if he follows Alice's instructions must be high, 
and 2) the probability Alice can tell if Bob broke the seal without permission must be high. 

Quantum seals were introduced by Bechmann-Pasquinucci \cite{QSeal1}, giving
rise to a vibrant discussion into their feasibility under
different considerations
\cite{QSeal2,QSeal3,QSeal3a,QSeal4,QSeal4b,QSeal5,QSeal6,QSeal7,QSeal8}. The upshot is that without limitations on Bob's
possible measurements, if Alice wants Bob to be able to learn her complete
classical message with high probability, then she
will not be able to detect his breaking of the seal with high probability\footnote{In past literature, seals where Alice and Bob needed to be able to succeed in their tasks with certainty were called ``perfect'' seals while ``imperfect'' seals allowed for only probabilistic success. When more than a single bit was encoded the seal was called a ``string seal.'' Here we do not bother with these distinctions, as our results are general and apply to all of these cases.}. However, this does not necessarily mean that quantum seals are entirely useless; for example, to send a very long message Alice can design a quantum seal that gives her a very high probability to catch Bob cheating, while also giving Bob a low but still finite probability of extracting the correct message that is much higher than his vanishingly small chance of randomly guessing it \cite{QSeal5}.

Quantum seals are related to other quantum cryptographic protocols with no-go results such as 
quantum bit commitment \cite{Qbitcommit,Qbitcommit2,BitCommitRudolph} and quantum one-time memories \cite{goldwasser2008one}. In bit commitment, Alice first provides a quantum
state, which is her commitment to a bit, and then she might later be asked to reveal her bit. Note this is in contrast to quantum seal protocols, in that once Alice has provided the seal to
Bob, she is never asked to provide further information. It has been shown that quantum bit commitment can be built from quantum seals, and so is a strictly weaker primitive \cite{sealVcommit}.

On the other hand, a straightforward argument shows that a one-time memory (OTM) can be used to create a nearly perfect seal. An OTM is a
device that contains two messages $s,t\in \{0,1\}^n$. Once one message is
read, the other is destroyed. To create a seal from an OTM, Alice
would set $s$ to be the message and $t$ to be a random string. Bob could
therefore learn $s$ perfectly without further input from Alice. However, once he reads $s$, he destroys $t$. 
To test if Bob had read the message, Alice could ask for the OTM back, try to learn $t$, and if
it was inaccessible or not the string she stored, she would know Bob had
cheated. Bob could try to
make a new OTM to give back to Alice as a fake, but he would not
know what string to store as $t$, so his guess would be inaccurate with high
probability.  Since seals are weaker than OTMs, a no-go for quantum
 seals does not necessarily follow from the no-go for quantum OTMs. However, we note that under certain physically realistic assumptions
 (such as no entangling operations), quantum OTMs can exist
 \cite{liu2014building}, which implies the existence of quantum seals under
 similar restricted scenarios.

In prior work, Chau \cite{QSeal5} proved optimal bounds on the success criteria of quantum seals in the case that Bob has knowledge of the different quantum states that he might receive from Alice. In this work, we weaken Bob's advantage by \textit{not} giving him a description of the seal states, and instead giving him only a classical description of a quantum measurement and instructions for how to associate the possible outcomes with classical messages. Alice promises Bob that if he uses the specified measurement on the seal state, he will obtain the correct message with some guaranteed probability. She does not provide Bob with any further information as to how to implement the quantum measurement; Bob's choice of implementation affects the probability that his measurement can be detected by Alice\footnote{Our setup is therefore similar to that in \cite{QSeal3}, but with the possibility of sealing a message that is longer than a single bit.}. Since Bob does not have information about the underlying states, it is potentially harder for him to design a cheating strategy.

Even with this restriction on Bob's information, we show that Bob can still read the message with high probability without detection. In particular, when we consider the case that Alice never wants to falsely accuse Bob of cheating, which is the same metric used by Chau \cite{QSeal5}, we achieve the same bounds. Thus it would seem that restricting Bob's information in this way does not have a significant effect, while our analysis is somewhat simplified relative to Chau's. 

We additionally examine an alternative operational metric for Alice's success in detecting whether Bob has broken the seal: we look at Alice's probability of detecting Bob cheating if she makes the optimal measurement to distinguish between the broken and unbroken seal states. This measurement may sometimes cause her to falsely accuse Bob of breaking the seal when he hasn't, but gives her a higher probability of detecting Bob's measurement when he has broken the seal\footnote{This is in contrast to the metric used by Chau \cite{QSeal5}, in which Alice makes a measurement that would never result in a false accusation of cheating against Bob, but which as a result is suboptimal for distinguishing between a broken and unbroken seal state.}. Under this metric we show that Alice's chances of detecting Bob, though improved, are still not good.

In order to obtain some of our results, in Sec.~\ref{Lemma} we prove a variation on 
the Gentle Measurement Lemma \cite{wilde2013quantum,winter1999coding,Watrous2018} which may be of independent interest.
The Gentle Measurement Lemma states that if a measurement outcome occurs with high probability, and
if that outcome is measured, then the state after the measurement is difficult
to distinguish from the original state. We
extend the Gentle Measurement Lemma to show that the state after measurement is difficult to distinguish from the original state with high probability even if the outcome of the measurement is not known.


\section{Preliminaries}

\subsection{Notation and Quantum Measurement}\label{sec:quantumBackground}
\label{Notation}

We use $\Hilb$ to denote a Hilbert space, and $\state(\Hilb)$ to denote the set
of positive linear operators acting on $\Hilb$ with trace one; $\state(\Hilb)$
is the set of density matrices on $\Hilb$. For $N\in \mathbb
{N}$, we let $[N]=\{1,\dots,N\}.$ $\mathcal{I}_A$ denotes the identity operator on $\Hilb_A$, but
we drop the subscript if clear from context.

A quantum measurement is described by a positive operator value measure (POVM).
A POVM is a set of operators $\{\mathcal{E}_i\}_{i\in [N]}$ acting on a Hilbert space
$\Hilb$ such that $\sum_{i=1}^N\mathcal{E}_i=\mathcal{I}$ and $\mathcal{E}_i\succeq 0$. Given a state $\rho\in
\state(\Hilb)$, the probability of measuring outcome $i$ is $\tr(\mathcal{E}_i\rho)$.

There are an infinite number of ways to implement a given POVM
\cite{Kraus83,braunstein1988quantum} (see \cite{peres2006quantum} for a nice
description of methods to implement a POVM) and the implementation affects the
state the system is left in after the measurement. Here, we think of the implementation as a two step process. In the first step, we apply what we call the \textit{standard implementation}: if $\rho$ is measured and outcome $i$ is obtained, the state collapses to $\sqrt{\mathcal{E}_i}\rho\sqrt{\mathcal{E}_i}/\tr(\mathcal{E}_i\rho )$. In the second step, a completely positive trace preserving (CPTP) map is applied to the resultant state, and this map can depend on the outcome $i$ of the first step.

If $\rho$ is measured using a POVM $\{\mathcal{E}_i\}$, and if the outcome of the measurement is not known, the post-measurement system is described by the density matrix $\sum_i\sqrt{\mathcal{E}_i}\rho\sqrt{\mathcal{E}_i}$ \cite{nielsen2002quantum}.

The trace distance between quantum states $\rho,\sigma\in \state(\Hilb)$ is
$\frac{1}{2}\|\rho-\sigma\|_1$, where $\|A\|_1 = \tr(\sqrt{A^\dagger A})$. The
trace distance has an important operational meaning: given a state promised to be either
$\rho$ or $\sigma$ with equal probability, the maximum probability of correctly
distinguishing which state one has is $\frac{1}{2}+\frac{1}{4}\|\rho-\sigma\|_1.$

Finally, let
\begin{align}\label{eqn:rho}
Z(x)=\left(\begin{array}{cc}
x & 0\\
0 & 1-x
\end{array}\right).
\end{align}

\subsection{Set-Up}
\label{TheProblem}
We assume the message protected by the seal takes a value $m\in[M]$ for $M\geq 2$.
Alice encodes the message into a pure state $\ket{\psi_m}\in \Hilb_{A}\otimes
\Hilb_{B}$ where $\Hilb_A$ refers to System $A$, the part of the system
that Alice retains, while $\Hilb_B$ refers to System $B$, the part of the system Alice
gives to Bob as the sealed message. (If Alice instead would prefer to encode into a mixed
state, we can always purify the state without loss of generality. See e.g.
\cite{nielsen2002quantum}.) Thus Alice sends to Bob the state
$\rho_m\in \state(\Hilb_B)$, where $\rho_m$ is the reduced density matrix of
$\ket{\psi_m}$ on $\Hilb_B$.

In addition to the state $\rho_m$, Alice gives Bob a classical description of a POVM $\vec{\mathcal{E}}=\{\mathcal{E}_{i,j}\}_{i\in[M],j\in [M_i]}$. She promises Bob that if he
performs the POVM $\vec{\mathcal{E}}$  on $\rho_m$, he will get an outcome $(i,j)$ such that $i=m$ with probability at least $p$. This promise implies
\begin{align}\label{eq:bound}
\sum_{j\in [M_m]}\tr\left(\mathcal{E}_{m,j}\rho_m\right)\geq p.
\end{align}

After giving System $B$ to Bob, Alice may ask for it to be returned to her at any point. Bob's goal in our scenario is to make a measurement on $\rho_m$ that allows him to determine $m$, but when he returns the system to Alice, she can not detect his measurement. Alice's goal is to design a state $\rho_m$ and a POVM $\vec{\mathcal{E}}$, with the properties described above, such that Bob can not learn $m$ without significantly altering $\ket{\psi_m}$, so that when Alice asks for System $B$ to be returned, she can reliably determine whether Bob has cheated.

We use two metrics to judge Alice's success in detecting Bob's potential
breaking of the seal. First, suppose Alice has no prior knowledge as to whether Bob has cheated, so her a priori belief is that she either has her original state or a broken state with equal probability. We call $p_{\textrm{dist}}$ her maximum success probability in distinguishing whether or not Bob made a measurement that allowed him to learn the message with probability at least $p$. Second, suppose Alice would like to detect
if Bob made a measurement that allowed him to learn the message with probability at least $p$, but never wants to falsely accuse Bob
of cheating if he didn't actually read the message. We call $p_{\textrm{NFP}}$ (NFP
for ``no false positives'') her maximum success probability in this task. This metric was previously used by Chau to quantify Alice's ability to detect cheating, but without the ``no false positives'' interpretation \cite{QSeal4b,QSeal5}. If Alice chooses her message state $\ket{\psi_m}$ from some distribution of states, we take $p_{\textrm{dist}}$ and $p_{\textrm{NFP}}$ to be her success probabilities averaged over her choice of state.

For the quantum seal to behave as desired, we would like to have a protocol
in which $p$ is large, and $p_{\textrm{dist}}$ or $p_{\textrm{NFP}}$ are large, so that Bob can read the message correctly with high probability, 
but also Alice can detect if he read it or not. However, we show that if $p$
is large, Bob can implement a POVM on $\rho_m$ in such a way that $p_{\textrm{dist}}$ and
$p_{\textrm{NFP}}$ will be small.

\subsection{A Variation of the Gentle Measurement Lemma}
\label{Lemma}

The Gentle Measurement Lemma \cite{wilde2013quantum,winter1999coding, Watrous2018} says that if a state has high overlap with a POVM operator and that outcome is measured, then the post-measurement state will not differ considerably from the pre-measurement state. While there are several different formulations of the Lemma, the one that is most relevant to this work is formulated as follows: given a POVM operator $\sop E_j$ on $\Hilb$
and a state $\rho\in D(\Hilb)$, where $\tr(\mathcal{E}_j\rho)\geq 1-\epsilon$, then
\begin{align}\label{eq:gentle}
\left\|\rho-\sqrt{\mathcal{E}_j}\rho\sqrt{\mathcal{E}_j}\right\|_1\leq 2\sqrt{\epsilon}.
\end{align}

Here we consider a variation of this Lemma for the case where an initial state $\rho$ is measured with a POVM, $\vec{\mathcal{E}}$, where one of the POVM operators $\sop E_j$ has high overlap with $\rho$, \textit{but where the outcome of the measurement is unknown.} 
This scenario occurs, for example, if Alice knows that Bob has made a measurement, but he hasn't told her what the outcome of his measurement is. When the outcome of a measurement is unknown, the best description of the post-measurement state is given by a probabilistic combination of all possible outcomes of the measurement. In particular, as discussed in \cref{sec:quantumBackground}, if the POVM $\{\sop E_i\}$ is applied to a state $\rho$, but the outcome is unknown, the density matrix of the resultant state is given by $\sum_i\sqrt{\mathcal{E}_i}\rho\sqrt{\mathcal{E}_i}.$ 

We show that the trace distance between the original state and the post measurement state is still small, even in the case that the outcome of the measurement is unknown. In particular, we prove the following:
\begin{lemma}\label{lemm:newGentle}
Given a state $\rho\in D(\Hilb)$ and a POVM $\{\mathcal{E}_i\}$ on $\Hilb$ such that there is one POVM
operator $\mathcal{E}_j$ where $\tr(\mathcal{E}_j\rho)\geq 1-\epsilon$, then the trace distance between the pre-measurement state and the post-measurement state when the measurement outcome is unknown is bounded as
\begin{align}
\left\|\rho-\sum_i\sqrt{\mathcal{E}_i}\rho\sqrt{\mathcal{E}_i}\right\|_1\leq 2\sqrt{\epsilon}+\epsilon.
\end{align}
\end{lemma}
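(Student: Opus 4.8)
The plan is to split off the near-certain outcome $j$ and lump together all of the remaining outcomes as a single small correction term. Concretely, write
\begin{align}
\rho-\sum_i\sqrt{\mathcal{E}_i}\rho\sqrt{\mathcal{E}_i}=\left(\rho-\sqrt{\mathcal{E}_j}\rho\sqrt{\mathcal{E}_j}\right)-\sum_{i\neq j}\sqrt{\mathcal{E}_i}\rho\sqrt{\mathcal{E}_i},
\end{align}
and apply the triangle inequality for the trace norm to get
\begin{align}
\left\|\rho-\sum_i\sqrt{\mathcal{E}_i}\rho\sqrt{\mathcal{E}_i}\right\|_1\leq\left\|\rho-\sqrt{\mathcal{E}_j}\rho\sqrt{\mathcal{E}_j}\right\|_1+\left\|\sum_{i\neq j}\sqrt{\mathcal{E}_i}\rho\sqrt{\mathcal{E}_i}\right\|_1.
\end{align}

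For the first term I would invoke the ordinary Gentle Measurement Lemma as a black box: the hypothesis $\tr(\mathcal{E}_j\rho)\geq 1-\epsilon$ is exactly its hypothesis, so \eqref{eq:gentle} bounds it by $2\sqrt{\epsilon}$. For the second term I would use positivity: each $\sqrt{\mathcal{E}_i}\rho\sqrt{\mathcal{E}_i}$ has the form $A\rho A^\dagger$ with $\rho\succeq 0$, hence is positive semidefinite, so the sum over $i\neq j$ is positive semidefinite and its trace norm equals its trace. Then, by cyclicity of the trace together with $\sum_i\mathcal{E}_i=\mathcal{I}$,
\begin{align}
\left\|\sum_{i\neq j}\sqrt{\mathcal{E}_i}\rho\sqrt{\mathcal{E}_i}\right\|_1=\sum_{i\neq j}\tr(\mathcal{E}_i\rho)=\tr\big((\mathcal{I}-\mathcal{E}_j)\rho\big)=1-\tr(\mathcal{E}_j\rho)\leq\epsilon.
\end{align}
Adding the two bounds yields $2\sqrt{\epsilon}+\epsilon$, which is the claim.

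I do not expect a genuine obstacle here: once the post-measurement operator is decomposed this way, the first piece is a verbatim application of the existing lemma and the second piece is an elementary trace identity. The only points requiring care are checking that $\mathcal{I}-\mathcal{E}_j\succeq 0$ (which follows from $\mathcal{E}_j\succeq 0$ and $\sum_i\mathcal{E}_i=\mathcal{I}$, so that the leftover POVM elements are themselves positive, making the lumped correction positive semidefinite) and the cyclicity step $\tr(\sqrt{\mathcal{E}_i}\rho\sqrt{\mathcal{E}_i})=\tr(\mathcal{E}_i\rho)$. It is worth remarking that the extra additive $\epsilon$ beyond the usual $2\sqrt{\epsilon}$ is precisely the total weight assigned to ``all the wrong outcomes combined,'' so for small $\epsilon$ it is dominated by the $2\sqrt{\epsilon}$ term and the lemma is only mildly weaker than the standard Gentle Measurement Lemma.
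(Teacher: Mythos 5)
Your proposal is correct and follows essentially the same route as the paper's proof: split off the $j$ term, bound it with the standard Gentle Measurement Lemma, and bound the remaining positive-semidefinite part by its trace, which is at most $\epsilon$. The only cosmetic difference is that you keep $\sum_{i\neq j}\sqrt{\mathcal{E}_i}\rho\sqrt{\mathcal{E}_i}$ as a single positive operator while the paper applies the triangle inequality termwise first; both yield the identical bound.
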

\begin{proof}
Using the triangle inequality:
\begin{align}\label{eq:gentleNew}
\left\|\rho-\sum_i\sqrt{\mathcal{E}_i}\rho\sqrt{\mathcal{E}_i}\right\|_1&\leq \left\|\rho-\sqrt{\mathcal{E}_j}\rho\sqrt{\mathcal{E}_j}\right\|_1+\sum_{i\neq j} \left\|\sqrt{\mathcal{E}_i}\rho\sqrt{\mathcal{E}_i}\right\|_1\nonumber\\
&\leq 2\sqrt{\epsilon}+\sum_{i\neq j} \left\|\sqrt{\mathcal{E}_i}\rho\sqrt{\mathcal{E}_i}\right\|_1,
\end{align}
where we've applied \cref{eq:gentle}. Because $\sqrt{\mathcal{E}_i}\rho\sqrt{\mathcal{E}_i}$ is positive semidefinite, and using properties of the trace,
\begin{align}\label{eq:gentle1}
\sum_{i\neq j} \left\|\sqrt{\mathcal{E}_i}\rho\sqrt{\mathcal{E}_i}\right\|_1=\sum_{i\neq j} \tr(\sqrt{\mathcal{E}_i}\rho\sqrt{\mathcal{E}_i})=\sum_{i\neq j} \tr(\mathcal{E}_i\rho)=\tr\left(\sum_{i\neq j}\mathcal{E}_i\rho\right).
\end{align}
Now since $\sum_i\mathcal{E}_i=\mathcal{I}$ and $\tr(\rho)=1$,
\begin{align}\label{eq:gentle2}
\tr\left(\sum_{i\neq j}\mathcal{E}_i\rho\right)=1-\tr\left(\mathcal{E}_j\rho\right)\leq \epsilon.
\end{align}
Combining \cref{eq:gentle1} and \cref{eq:gentle2} and plugging into \cref{eq:gentleNew}, we have
\begin{align}
\left\|\rho-\sum_i\sqrt{\mathcal{E}_i}\rho\sqrt{\mathcal{E}_i}\right\|_1\leq 2\sqrt{\epsilon}+\epsilon.
\end{align}
\end{proof}
\section{A Naive Approach that Fails}
\label{Naive}

In this section, we present a straightforward strategy that seems promising\footnote{Indeed, this strategy resembles the initial proposal for a quantum seal by Bechmann-Pasquinucci \cite{QSeal1}.}, yet ultimately fails.

Let $M=2$, so Alice wants to encode a binary message. Suppose Alice creates a state on $3q$ qubits that she plans to send entirely to Bob; that is, she sets $\Hilb_A=1$, and $\Hilb_B=\mathbb{C}^{2^{3q}}$. She chooses $\sigma,\tau\in S_{3q}$ uniformly at random, where $S_n$ is the symmetric
group of degree $n$. Then Alice gives one of the following states to Bob, depending on whether she wants the message to be ``1'' or ``2'':
\begin{align}
\ket{\psi_{1}}&=U_\sigma\ket{0}^{\otimes 2q}\ket{+}^{\otimes q},\nonumber\\
\ket{\psi_{2}}&=U_\tau\ket{1}^{\otimes 2q}\ket{+}^{\otimes q},
\end{align}
where $U_\sigma$ (respectively $U_\tau$) is a unitary that
acts on a Hilbert space of $3q$ qubits, and permutes the qubit registers
according to the permutation $\sigma$ (resp. $\tau$).

Alice tells Bob to measure each qubit using the POVM $\{\kb{0},\kb{1}\}$ (i.e. the standard basis projective measurement), and if the number of $0$ outcomes is at least $3q/2$, he should decide $m=1$, and
otherwise, he should decide $m=2$.
If Bob uses the standard implementation of Alice's POVM, he will perform a projective measurement, and he will be able to read the message perfectly, since the the number of $0$ outcomes will be at most $q$ when $m=1$, and at least $2q$
when $m=2.$

After the standard implementation, Bob is left with a standard basis state, which will be nearly orthogonal to the
original state. Thus if Alice asked for the state back, she would with high
probability be able to detect Bob's measurement. 

Bob could try to disguise his measurement by applying a CPTP map to alter his state after measurement.
Let's assume, without loss of generality, that $m=1$, and also that Bob knows that Alice
originally sent a state of the form $U_\sigma\ket{0}^{\otimes
2q}\ket{+}^{\otimes q}$ for some $\sigma\in S_{3q}$. (This extra information can only help Bob.) Bob can replace
qubits in registers where he got outcome $1$ with states $\ket{+}$, to try to
make his state closer to Alice's original state. However with extremely high
probability in the limit of large $q$ (using e.g. Hoeffding's inequality
\cite{Hoeff63}), he will measure $0$'s in about half of the registers that originally contained
the state $\ket{+}$. For large $q$, Bob has a vanishingly small probability of correctly
guessing where these ``false 0'' registers are, and if he guesses
incorrectly, it will make his overlap worse. Thus, there is very little Bob
can do to recover from the measurement; the seal has been broken, and Alice will detect his measurement. 

So why does this protocol fail? While Alice told Bob that he should measure in
the standard basis, Bob can instead use Alice's instructions to make a different but related measurement. He measures using the standard implementation of the POVM $\{\Pi_1,\Pi_2\}$ where $\Pi_1$ is the projector onto standard
basis states whose strings have more than $3q/2$ zeros, and $\Pi_2$ is the
projector onto the remaining standard basis states. Bob has thereby combined all of the measurement operators that correspond to a given outcome into a single measurement operator.
For any choice of $\sigma,\tau\in S_{3q}$,
$\Pi_1\ket{\psi_{1}}=\ket{\psi_{1}}$ and
$\Pi_2\ket{\psi_{2}}=\ket{\psi_{2}}$. Thus Bob can deterministically distinguish the value of the message without disturbing the state and breaking the seal, and Alice will be completely unaware of his measurement.

In the next section, we show that there is always a way for Bob to cheat in a
manner similar to this, as long as Alice wants Bob to be able to read the message
with high probability.

\section{No-Go For Quantum Seals}
\label{sec:noGo}

We will show that a good strategy for Bob is to apply the standard implementation of the POVM $\{F_i\}_{i\in [M]}$, for 
\begin{align}\label{eq:POVMcombine}
 \mathcal{F}_i=\sum_{j\in [M_i]}\mathcal{E}_{i,j},
 \end{align}
where $\mathcal{E}_{i,j}$ are the elements of Alice's recommended POVM. If outcome $\mathcal{F}_i$ occurs, Bob decides the message is $i$. Averaged over Bob's outcome, the full state on $\Hilb_A\otimes \Hilb_B$ after
measurement is (see Section \ref{sec:quantumBackground})
\begin{align}\label{eq:postMeas}
\sum_{i=1}^{M}\mathcal{I}_A\otimes \sqrt{\mathcal{F}_i}\kb{\psi_m} \mathcal{I}_A\otimes \sqrt{\mathcal{F}_i}.
\end{align}

Now if Alice asks for Bob to return his system, and he did not make a measurement, she will have the state $\kb{\psi_m}$. If he did make the measurement using the POVM in \cref{eq:POVMcombine}, the state will be that in \cref{eq:postMeas}. 

We first bound $p_{\textrm{dist}}$. Alice's goal is to determine
which state she possesses. We assume Alice knows that if Bob made a
measurement, he measured using the standard implementation of the POVM in \cref{eq:POVMcombine},
as this information can only help her. Then using \cref{lemm:newGentle}, we have
\begin{align}\label{eq:pGuessFinal}
p_{\textrm{dist}}\leq&\frac{1}{2}+\frac{1}{4}\left\|\kb{\psi_m}-\sum_{i=1}^{M}\mathcal{I}_A\otimes \sqrt{\mathcal{F}_i}\kb{\psi_m} \mathcal{I}_A\otimes \sqrt{\mathcal{F}_i}\right\|_1\leq\frac{1}{2}+\frac{1}{4}\left(2\sqrt{1-p}+1-p)\right),
\end{align} 
which we plot in \cref{fig:guess}.
\begin{figure*}[t]
\begin{center}
\includegraphics{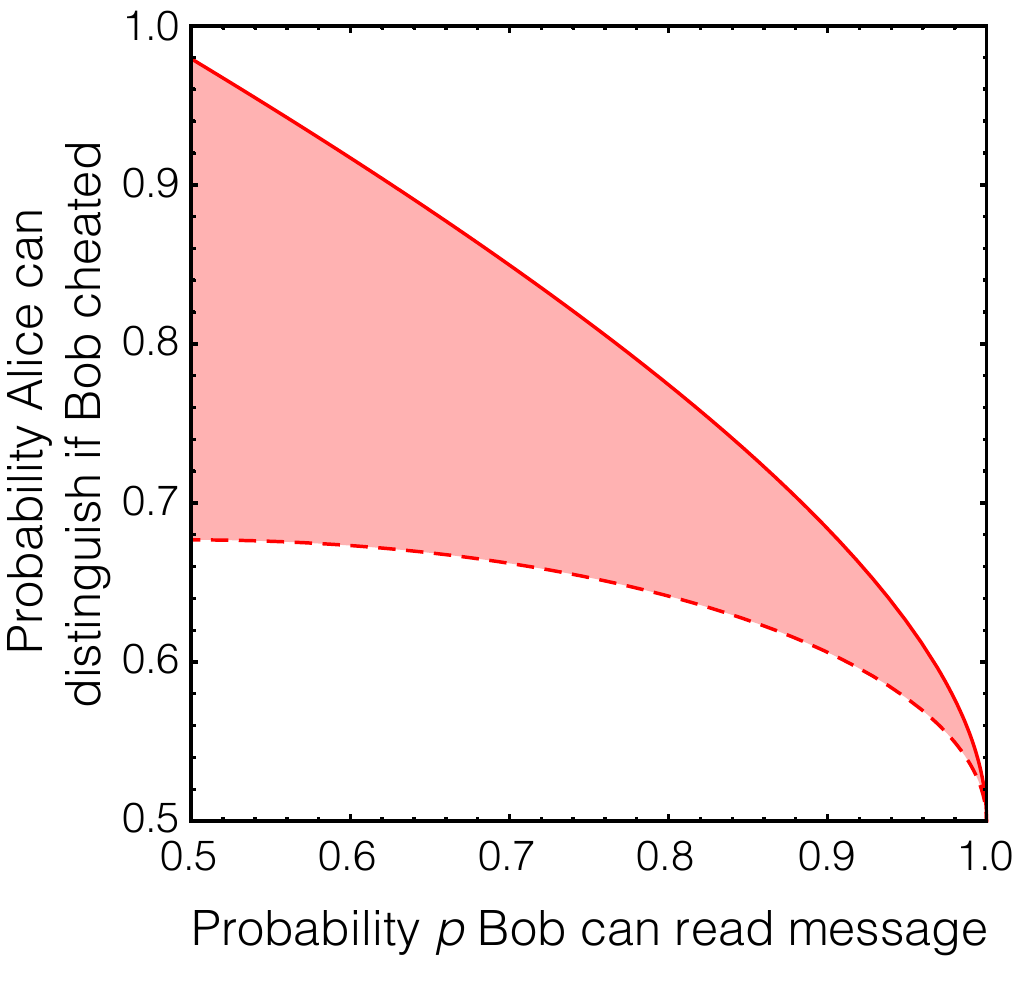}
\caption{An upper bound on the probability Alice can correctly distinguish whether Bob has cheated, $p_{\textrm{dist}}$, is plotted as a function of the probability $p$ that Bob can successfully read the sealed message if he follows Alice's instructions (solid red line,  Eq.~\ref{eq:pGuessFinal}). This bound holds for any message length ($M\geq 2$). A lower bound on the achievable $p_{\textrm{dist}}$ in the case of a single bit message ($M=2$), using the protocol described in Section ~\ref{Achieve}, is also shown (dashed red line, Eq.~\ref{eq:distAchieve}). The shaded portion represents what may be achievable for $M=2$ using a different protocol.} 
\label{fig:guess}
\end{center}
\end{figure*}

Next we bound $p_{\textrm{NFP}}.$
If Bob is honest, when he returns System $B$ to Alice, she will have the state $\ket{\psi_m}$. Therefore, Alice needs a two-outcome POVM, such that one outcome will never occur if Bob is honest. The POVM that achieves this is $\{\mathcal{I}_{AB}-\kb{\psi_m},\kb{\psi_m}\}$, where the first outcome will only be observed if Bob is dishonest. Thus
\begin{align}
p_{\textrm{NFP}}=&\tr\left(\left(\mathcal{I}_{AB}-\kb{\psi_m}\right)\left(\sum_{i\in [W]}\left(\mathcal{I}_A\otimes \sqrt{\mathcal{F}_i}\kb{\psi_m} \mathcal{I}_A\otimes \sqrt{\mathcal{F}_i}\right)\right)\right)\nonumber\\
=& 1-\sum_{i\in [M]}|\bra{\psi_m}\mathcal{I}_A\otimes\sqrt{\mathcal{F}_i}\ket{\psi_m}|^2
\end{align}

Now
\begin{align}
\bra{\psi_m}\mathcal{I}_A\otimes \sqrt{\mathcal{F}_i}\ket{\psi_m}\geq\bra{\psi_m}\mathcal{I}_A\otimes \mathcal{F}_i\ket{\psi_m}=\tr(\mathcal{F}_i\rho_m)
\end{align}
because the eigenvalues of $\mathcal{F}_i$ are between 0 and 1. Plugging in and using Cauchy-Schwarz, we have
\begin{align}\label{eq:upperPNFP}
p_{\textrm{NFP}}\leq 1-\tr(\mathcal{F}_m\rho_m)^2-\frac{(\sum_{i\in [M]\setminus m}\tr(\mathcal{F}_i\rho_m))^2}{M-1}=1-\tr(\mathcal{F}_m\rho_m)^2-\frac{(1-\tr(\mathcal{F}_m\rho_m))^2}{M-1}.
\end{align}
For values of $\tr(\mathcal{F}_m\rho_m)\geq 1/M$, this expression is decreasing in $\tr(\mathcal{F}_m\rho_m)$, so since $\tr(\mathcal{F}_m\rho_m)\geq p$, we have
\begin{align}\label{eq:NFPBounds}
 p_{\textrm{NFP}}\leq 1-p^2-\frac{(1-p)^2}{M-1}.
 \end{align} 
 Bounds on $p_{\textrm{NFP}}$ for several values of $M$ are shown in \cref{fig:NFP}. This formula is identical to Chau's bound \cite[Eq. 33]{QSeal5}\footnote{In Chau's case, he deals with two probabilities, $p$ and $p_\textrm{max}$, which result from Bob having knowledge of the seal state. In our case Bob has no knowledge of the seal state, so effectively $p=p_\textrm{max}$.}. Thus unfortunately we find Alice can not boost her success of detecting Bob's breaking of the seal by withholding information about the seal states.

\begin{figure*}[t]
\begin{center}
\includegraphics{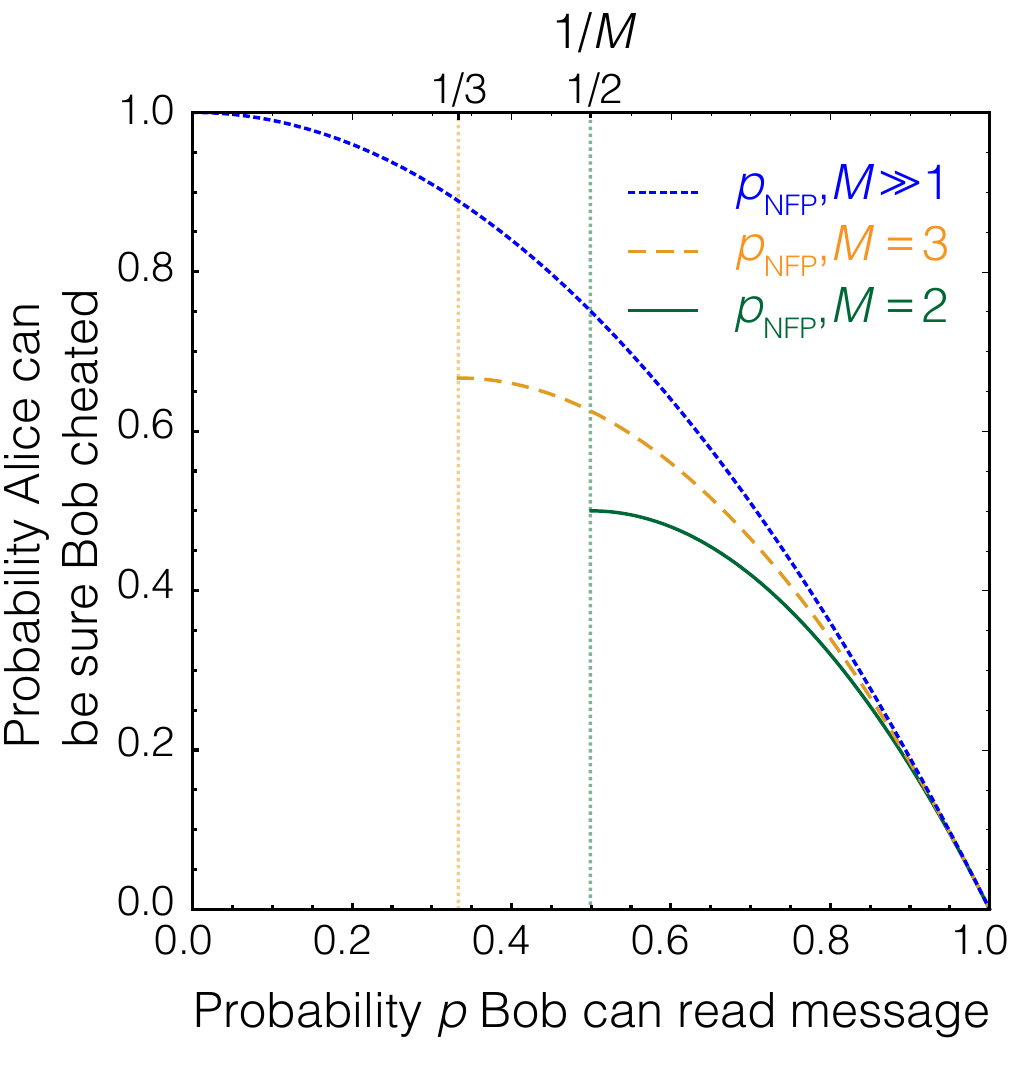}
\caption{Tight bounds on the probability Alice can know with certainty that Bob has cheated, $p_{\textrm{NFP}}$, are plotted as a function of the probability $p$ that Bob can successfully read the message if he follows Alice's instructions, for different ranges $M$ of possible message values (Eq.~\ref{eq:NFPBounds}). Without any information Bob can always correctly guess the value of the message with probability $1/M$, and we therefore plot each bound over the range $p\geq1/M$ (dashed vertical lines).}
\label{fig:NFP}
\end{center}
\end{figure*}

By either metric ($p_{\textrm{dist}}$ or $p_{\textrm{NFP}}$), we see that there is a trade off. If Alice wants Bob to be able to read the message with probability close to 1, then she will not be able to detect with high probability whether he has broken the seal.


\section{Achievability}
\label{Achieve}

In order to lower bound the achievability of the upper bounds of \cref{sec:noGo}, it is sufficient to analyze the case that Alice not only provides a classical description of a POVM (and information about how to associate outcomes with classical messages), but additionally gives Bob information about the quantum seal states. Since this additional information can only help Bob, it will allow us to put lower bounds on $p_{\textrm{dist}}$ and $p_{\textrm{NFP}}$.

This is precisely the scenario that Chau analyzed, and he showed that his bounds are achievable \cite{QSeal5}. Since our $p_{NFP}$ is the same as his bounds, this implies that our $p_{NFP}$ bound is optimal (see Fig \ref{fig:NFP}). 
It is interesting to note that when $M\gg 1$, for which Bob will have a negligibly small probability of randomly guessing the correct message, quantum seals still offer some rather non-trivial capabilities. For example, Alice can create a quantum seal where Bob has a $10\%$ chance of correctly reading the entire message, but Alice can detect if Bob has cheated with $99\%$ probability and no false positives. 

We next investigate whether Alice can achieve the bounds of \cref{sec:noGo} for the metric $p_{\textrm{dist}}$ for the case of a single bit message ($M=2$). We consider a specific strategy for Alice and determine the probability with which she can detect Bob's cheating. This allows us to put a lower bound on $p_{\textrm{dist}}$, but the specific strategy we choose might not be optimal.

Consider the case that $\Hilb_A=1$, $\Hilb_B=\mathbb{C}^2$, and $M=2$,
\begin{align}
\ket{\psi_1}=&\sqrt{p}\ket{0}+e^{i\phi}\sqrt{1-p}\ket{1},\\
\ket{\psi_2}=&\sqrt{1-p}\ket{0}+e^{i\phi}\sqrt{p}\ket{1},
\end{align}
for $\phi$ that Alice has chosen uniformly at random from $[0,2\pi]$, and 
$\vec{\mathcal{E}}=\{\mathcal{E}_1,\mathcal{E}_2\}$ where
\begin{align}\label{eq:POVM}
\mathcal{E}_1=\left(\begin{array}{cc}
1 & 0\\
0 &0
\end{array}\right),
\qquad
\mathcal{E}_2=\left(\begin{array}{cc}
0 & 0\\
0 &1
\end{array}\right).
\end{align}
Alice provides all of this information to Bob, as well as telling him the value of $p$ (we assume $p>1/2$). As discussed above, since we are putting a lower bound on $p_{\textrm{dist}}$, it is acceptable for Alice to give Bob this extra information about the seal states, since that information can only help him.

Alice's success probability $p_{\textrm{dist}}$ is always analyzed in the case that Bob makes a measurement that obtains the correct outcome with probability at least $p$. But note that the unique optimal POVM for distinguishing $\ket{\psi_1}$ and $\ket{\psi_2}$ is the POVM $\mathcal{E}_1$ and $\mathcal{E}_2$ from \cref{eq:POVM} \cite{blaquiere1987information}, and this measurement only succeeds with probability $p$. Thus Bob must use this POVM to achieve the desired success probability.

Bob wants to minimize $p_{\textrm{dist}}$ on average over Alice's choice of $\phi$. Using the fact that the trace norm is equal to the Euclidean norm on the Bloch sphere \cite{nielsen2002quantum}, and using the fact that the position with the smallest distance on average to any point on a circle is in the center of the circle, we have that Bob would ideally like to return $Z(p)$ if $m=1$, and $Z(1-p)$ if $m=2$ (see \cref{eqn:rho}). 
Luckily for Bob, if he simply returns the the result of his standard implementation of $\sop E$ (either the state $\ket{0}$ or $\ket{1}$) to Alice, on average over his measurement outcomes, he will return precisely $Z(p)$ in the case that $m=1$, and $Z(1-p)$ when $m=2.$ Thus Bob's optimal strategy for minimizing $p_{\textrm{dist}}$ is to do a projective measure in the standard basis and return the outcome state if Alice requests it.

We can now bound $p_{\textrm{dist}}$ (assuming Alice knows Bob implements the optimal strategy, since there is no reason for Bob to use any other strategy):
\begin{align} \label{eq:distAchieve}
p_{\textrm{dist},(M=2)}\geq\frac{1}{2}+\frac{1}{2}\left\|Z(p)-\ket{\psi_1}\!\bra{\psi_1}\right\|_1=\frac{1}{2}+\frac{\sqrt{2p(1-p)}}{4}.
\end{align}
This lower bound is shown in Fig.~\ref{fig:guess}. Note that it exceeds the tight bounds for $p_{\textrm{NFP}}$ for $M=2$, meaning that, unsurprisingly, Alice can better detect cheating if she can tolerate occasionally falsely accusing Bob. We leave for future investigation whether it is possible to improve upon this lower bound for $p_{\textrm{dist}}$, the lower bounds on $p_{\textrm{dist}}$ for longer messages, and the related question of whether the upper bound of \cref{sec:noGo} is achievable or not.

\section{Acknowledgments}
We're grateful to Guang Ping He for making us aware of past work in this subject and for valuable insights. 
We would particularly like to thank Maxwell Schleck for many helpful suggestions.
We'd also like acknowledge many people who discussed this work or provided feedback:
Matthias Christandl, Paul Hess, Yi-Kai Liu, Mikhail Lukin, Carl Miller, Roee
Ozeri, Anders Sorenson, Jeff Thompson, Stephanie Wehner, Jun Ye, and Terry Rudolph. Kimmel
completed some of this work while at the Joint Center for Quantum Information
and Computer Science at the University of Maryland. Kolkowitz gratefully
acknowledges support from NIST, JILA, CU Boulder, and the NRC postdoctoral
program.

\bibliography{WillBibv6}

\begin{thebibliography}{10}

\bibitem{HistoryofSeals}
Malati~J Shendge.
\newblock The use of seals and the invention of writing.
\newblock {\em Journal of the Economic and Social History of the Orient/Journal
  de l'histoire economique et sociale de l'Orient}, pages 113--136, 1983.

\bibitem{QSeal1}
H~Bechmann-Pasquinucci.
\newblock Quantum seals.
\newblock {\em International Journal of Quantum Information}, 1(02):217--224,
  2003.

\bibitem{QSeal2}
Sudhir~Kumar Singh and R~Srikanth.
\newblock Quantum seals.
\newblock {\em Physica Scripta}, 71(5):433, 2005.

\bibitem{QSeal3}
Guang-Ping He.
\newblock Upper bounds of a class of imperfect quantum sealing protocols.
\newblock {\em Physical Review A}, 71(5):054304, 2005.

\bibitem{QSeal3a}
Helle Bechmann-Pasquinucci, Giacomo~Mauro D'Ariano, and Chiara Macchiavello.
\newblock Impossibility of perfect quantum sealing of classical information.
\newblock {\em International Journal of Quantum Information}, 3(02):435--440,
  2005.

\bibitem{QSeal4}
Guang~Ping He.
\newblock Quantum bit string sealing.
\newblock {\em International Journal of Quantum Information}, 4(04):677--687,
  2006.

\bibitem{QSeal4b}
HF~Chau.
\newblock Insecurity of imperfect quantum bit seal.
\newblock {\em Physics Letters A}, 354(1-2):31--34, 2006.

\bibitem{QSeal5}
HF~Chau.
\newblock Quantum string seal is insecure.
\newblock {\em Physical Review A}, 75(1):012327, 2007.

\bibitem{QSeal6}
Guang~Ping He.
\newblock Comment on ``{Q}uantum string seal is insecure''.
\newblock {\em Physical Review A}, 76(5):056301, 2007.

\bibitem{QSeal7}
HF~Chau.
\newblock Reply to ``{C}omment on `{Q}uantum string seal is insecure'''.
\newblock {\em Physical Review A}, 76(5):056302, 2007.

\bibitem{QSeal8}
Masaki Nakanishi, Seiichiro Tani, and Shigeru Yamashita.
\newblock An information-theoretic security analysis of quantum string sealing.
\newblock In {\em Proceedings of the 6th WSEAS international conference on
  Information security and privacy}, pages 30--35. World Scientific and
  Engineering Academy and Society (WSEAS), 2007.

\bibitem{Qbitcommit}
Dominic Mayers.
\newblock Unconditionally secure quantum bit commitment is impossible.
\newblock {\em Physical Review Letters}, 78(17):3414, 1997.

\bibitem{Qbitcommit2}
Hoi-Kwong Lo and Hoi~Fung Chau.
\newblock Is quantum bit commitment really possible?
\newblock {\em Physical Review Letters}, 78(17):3410, 1997.

\bibitem{BitCommitRudolph}
Robert~W Spekkens and Terry Rudolph.
\newblock Degrees of concealment and bindingness in quantum bit commitment
  protocols.
\newblock {\em Physical Review A}, 65(1):012310, 2001.

\bibitem{goldwasser2008one}
Shafi Goldwasser, Yael~Tauman Kalai, and Guy~N Rothblum.
\newblock One-time programs.
\newblock In {\em Annual International Cryptology Conference}, pages 39--56.
  Springer, 2008.

\bibitem{sealVcommit}
Guang~Ping He.
\newblock Quantum bit commitment is weaker than quantum bit seals.
\newblock {\em The European Physical Journal D}, 68(5):132, 2014.

\bibitem{liu2014building}
Yi-Kai Liu.
\newblock Building one-time memories from isolated qubits.
\newblock In {\em Proceedings of the 5th conference on Innovations in
  theoretical computer science}, pages 269--286. ACM, 2014.

\bibitem{wilde2013quantum}
Mark~M Wilde.
\newblock {\em Quantum information theory}.
\newblock Cambridge University Press, 2013.

\bibitem{winter1999coding}
Andreas Winter.
\newblock Coding theorem and strong converse for quantum channels.
\newblock {\em IEEE Transactions on Information Theory}, 45(7):2481--2485,
  1999.

\bibitem{Watrous2018}
John Watrous.
\newblock {\em The theory of quantum information}.
\newblock Cambridge University Press, 2018.

\bibitem{Kraus83}
K.~Kraus, A.~B{\"o}hm, J.~D. Dollard, and W.~H. Wootters, editors.
\newblock {\em States, Effects, and Operations Fundamental Notions of Quantum
  Theory}, volume 190 of {\em Lecture Notes in Physics, Berlin Springer
  Verlag}, 1983.

\bibitem{braunstein1988quantum}
Samuel~L Braunstein and Carlton~M Caves.
\newblock Quantum rules: An effect can have more than one operation.
\newblock {\em Foundations of Physics Letters}, 1(1):3--12, 1988.

\bibitem{peres2006quantum}
Asher Peres.
\newblock {\em Quantum theory: concepts and methods}, volume~57.
\newblock Springer Science \& Business Media, 2006.

\bibitem{nielsen2002quantum}
Michael~A Nielsen and Isaac Chuang.
\newblock Quantum computation and quantum information, 2002.

\bibitem{Hoeff63}
Wassily Hoeffding.
\newblock Probability inequalities for sums of bounded random variables.
\newblock {\em Journal of the American Statistical Association},
  58(301):13--30, 1963.

\bibitem{blaquiere1987information}
A~Blaquiere, S~Diner, and G~Lochak.
\newblock Information complexity and control in quantum physics.
\newblock In {\em Proceedings of the 4th International Seminar on Mathematical
  Theory of Dynamical Systems and Microphysics Udine. Vienna: Springer}.
  Springer, 1987.

\end{thebibliography}
\bibliographystyle{unsrt}
\end{document}